\documentclass[sigconf]{acmart}
\copyrightyear{2026}
\acmYear{2026}
\setcopyright{cc}
\setcctype{by}
\acmConference[KDD '26]{Proceedings of the 32nd ACM SIGKDD Conference on Knowledge Discovery and Data Mining V.2}{August 09--13, 2026}{Jeju Island, Republic of Korea}
\acmBooktitle{Proceedings of the 32nd ACM SIGKDD Conference on Knowledge Discovery and Data Mining V.2 (KDD '26), August 09--13, 2026, Jeju Island, Republic of Korea}
\acmDOI{10.1145/3770855.3818953}
\acmISBN{979-8-4007-2259-2/2026/08}
\usepackage{multirow}
\usepackage{amsmath,amsthm}

\newtheorem{proposition}{Proposition}

\AtBeginDocument{%
  }

\begin{document}

\title{AS-Bridge: A Bidirectional Generative Framework Bridging Next-Generation Astronomical Surveys}


\author{Dichang Zhang}
\affiliation{%
  \institution{Stony Brook University}
  \city{Stony Brook, NY}
  \country{USA}
}
\email{diczhang@cs.stonybrook.edu}

\author{Yixuan Shao}
\affiliation{%
  \institution{Stony Brook University}
  \city{Stony Brook, NY}
  \country{USA}
}
\email{yixuan.shao@stonybrook.edu}

\author{Simon Birrer}
\affiliation{%
  \institution{Stony Brook University}
  \city{Stony Brook, NY}
  \country{USA}
}
\email{simon.birrer@stonybrook.edu}

\author{Dimitris Samaras}
\affiliation{%
  \institution{Stony Brook University}
  \city{Stony Brook, NY}
  \country{USA}
}
\email{samaras@cs.stonybrook.edu}

\renewcommand{\shortauthors}{Dichang Zhang, Yixuan Shao, Simon Birrer, and Dimitris Samaras}

\begin{abstract}
The upcoming decade of observational cosmology will be shaped by large sky surveys, such as the ground-based LSST at the Vera C. Rubin Observatory and the space-based Euclid mission. While they promise an unprecedented view of the Universe across depth, resolution, and wavelength, their differences in observational modality, sky coverage, point-spread function, and scanning cadence make joint analysis beneficial, but also challenging.
To facilitate joint analysis, we introduce A(stronomical)S(urvey)-Bridge, a bidirectional generative model that translates between ground- and space-based observations. AS-Bridge learns a diffusion model that employs a stochastic Brownian Bridge process between the LSST and Euclid observations. The two surveys have overlapping sky regions, where we can explicitly model the conditional probabilistic distribution between them. We show that this formulation enables new scientific capabilities beyond single-survey analysis, including faithful probabilistic predictions of missing survey observations and inter-survey detection of rare events. 
These results establish the feasibility of inter-survey generative modeling. AS-Bridge is therefore well-positioned to serve as a complementary component of future LSST–Euclid joint data pipelines, enhancing the scientific return once data from both surveys become available. Data and code are available at \href{https://github.com/ZHANG7DC/AS-Bridge}{https://github.com/ZHANG7DC/AS-Bridge}.

\end{abstract}

\begin{CCSXML}
<ccs2012>
   <concept>
       <concept_id>10010405.10010432.10010435</concept_id>
       <concept_desc>Applied computing~Astronomy</concept_desc>
       <concept_significance>500</concept_significance>
       </concept>
   <concept>
       <concept_id>10010147.10010178.10010224.10010245.10010254</concept_id>
       <concept_desc>Computing methodologies~Reconstruction</concept_desc>
       <concept_significance>300</concept_significance>
       </concept>
   <concept>
       <concept_id>10010147.10010257.10010258.10010260.10010229</concept_id>
       <concept_desc>Computing methodologies~Anomaly detection</concept_desc>
       <concept_significance>300</concept_significance>
       </concept>
 </ccs2012>
\end{CCSXML}

\ccsdesc[500]{Applied computing~Astronomy}
\ccsdesc[300]{Computing methodologies~Reconstruction}
\ccsdesc[300]{Computing methodologies~Anomaly detection}

\keywords{Observational Cosmology; Astronomical Survey; Generative Model; Multimodal Learning}

\maketitle
\begin{figure}[h]
    \centering
    \includegraphics[width=0.9\linewidth]{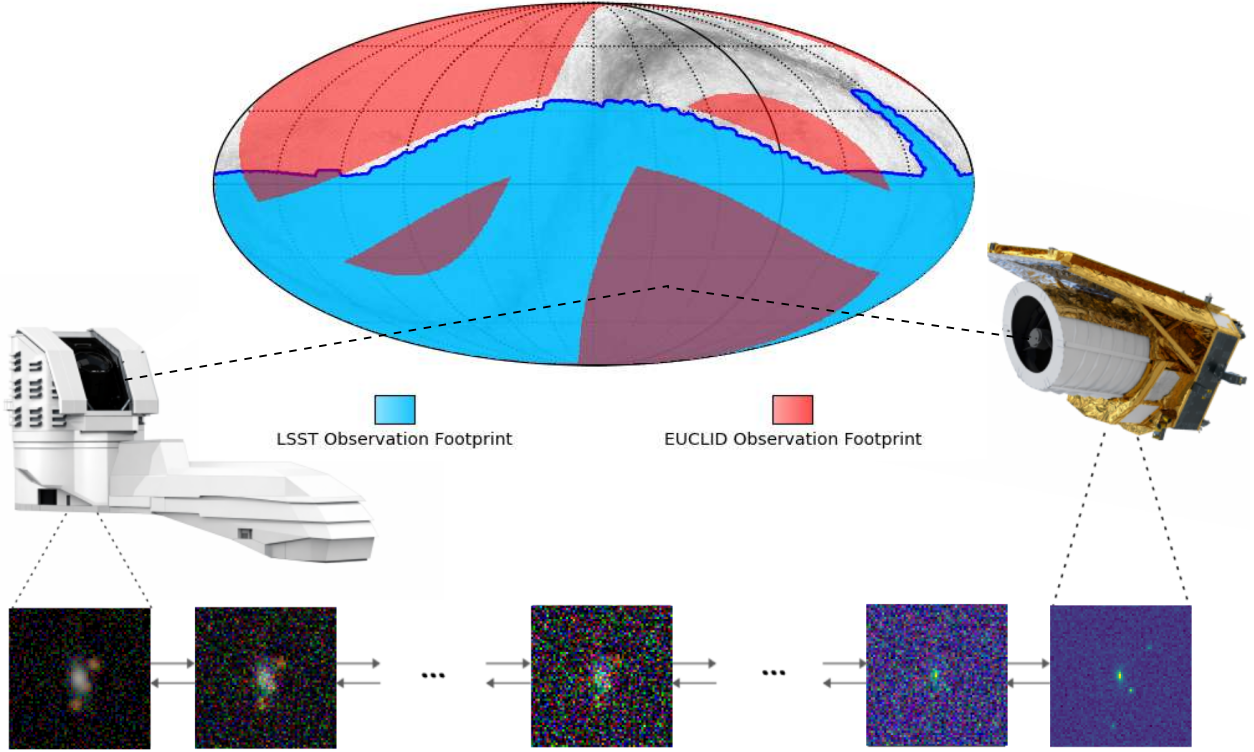}
    \caption{
    Overview of AS-Bridge. The central panel shows the visible sky, with the LSST and Euclid survey footprints marked in blue and red, respectively. The overlapping regions indicate areas jointly observed by the ground-based LSST (left) and the space-based Euclid mission (right). LSST provides multi-band optical images that are more blended due to atmospheric seeing, while Euclid delivers sharper near-infrared observations from space. From these overlapping regions, matched image cutouts are extracted and used to train AS-Bridge, which models the probabilistic translation between the two survey domains using a Brownian bridge formulation.
    }
    \Description{Diagram showing LSST and Euclid survey footprints, paired image cutouts, and the AS-Bridge Brownian bridge translation process between the two survey domains.}
    \label{fig:bidirectional_lsst_euclid_bridge}
\end{figure}
\section{Introduction}
In this decade, observational cosmology will be driven by major flagship surveys, such as the ground-based Legacy Survey of Space and Time (LSST) at the NSF–DOE Vera C. Rubin Observatory~\cite{ivezic2019lsst} and the space-based ESA Euclid mission~\cite{mellier2025euclid}. Together, they provide unprecedented coverage across depth, angular resolution, wavelength, and temporal cadence, enabling major advances in topics ranging from dark energy and dark matter to asteroids and other solar system objects~\cite{thelsstdarkenergysciencecollaboration2021lsstdarkenergyscience,2020,ivezic2019lsst,schwamb2018sssc,Wagg_2025,Kurlander_2025}.

Fundamental differences in observational modality—such as Point Spread Functions (PSFs), noise statistics, photometric bandpasses, and scanning cadence—introduce systematic distribution shifts that complicate cross-survey inference.

The mapping between ground-based and space-based imaging is inherently ill-posed in both directions. Recovering Euclid-quality morphology from LSST requires resolving ambiguity induced by atmospheric blurring and background noise. Conversely, mapping Euclid observations back to the LSST domain requires inferring spectral information from fewer and different bands, making the relationship non-identifiable from a single deterministic mapping. Therefore, inter-survey translation should be treated as a probabilistic process that can sample multiple valid realizations consistent with the available observations.

To model this probabilistic process, we introduce AS-Bridge, a unified bidirectional generative framework designed to model the stochastic connection between survey domains. AS-Bridge formulates inter-survey translation as a bidirectional Brownian bridge process, learning a stochastic path that connects the LSST and Euclid data distributions using overlapping sky regions as anchors. This formulation explicitly models the probabilistic transition between the two surveys and enables calibrated probabilistic inference in both directions.

We demonstrate the following benefits from inter-survey generation:

\textbf{Inter-survey restoration and synthesis.}
After training on paired cutouts—currently from simulations, but which can be readily constructed once official data releases become available after camera calibration—AS-Bridge can generate complementary observations in non-overlapping sky regions. We show that AS-Bridge recovers the target-survey observation conditioned on the source-survey input, and empirically demonstrate superior probabilistic reconstruction quality compared to competing baselines (Sec.~\ref{sec:ablation}). This enables analyses developed for one survey to be transferred to regions observed only by the other, while preserving uncertainty quantification.

\textbf{Inter-survey rare event detection.}
We cast rare-event detection as inter-survey inconsistency under uncertainty. At inference time, a paired observation is mapped into the inter-survey bridge and reconstructed back to one domain; the reconstruction inconsistency, aggregated over multiple stochastic samples, serves as an anomaly score. This yields an anomaly detector that simultaneously exploits complementary observational modalities through cross-survey conditioning and probabilistic reconstruction along the diffusion bridge. We demonstrate improved detection of diverse astronomical anomalies, using strong gravitational lenses as a representative example of structural outliers, compared with general-purpose baselines.

Our contributions are as follows:
\begin{itemize}
    \item We introduce the inter-survey modeling problem and formulate probabilistic inter-survey inference for next-generation cosmological surveys by treating cross-survey translation as an ill-posed, bidirectional inference problem. We further identify rare event detection as a key downstream application enabled by this formulation.

    \item We propose AS-Bridge, a generative framework that models inter-survey translation as a bidirectional Brownian bridge process, recovers the target conditional distribution given the source survey observation, and enables rare event detection through inter-survey probabilistic inconsistency.

    \item We construct the first realistic simulated benchmark for this problem and systematically compare AS-Bridge against representative baseline methods under standardized evaluation protocols and metrics.
\end{itemize}

\section{Background and Related Work}

\subsection{Next-Generation Astronomical Surveys}

This decade marks a paradigm shift in observational cosmology driven by the emergence of wide-field, high-throughput sky surveys. These next-generation surveys aim to map the Universe with unprecedented depth, angular resolution, spectral coverage, and temporal cadence, enabling precision studies of cosmology, galaxy evolution, and time-domain phenomena at scale.

Among them, the ground-based Legacy Survey of Space and Time (LSST)~\cite{ivezic2019lsst} at the Vera C. Rubin Observatory, the space-based Euclid mission, the Nancy Grace Roman Observatory~\cite{roman}, the Square Kilometer Array (SKA)~\cite{simon} and the Simon's Observatory. In this paper we focus on LSST and Euclid, as these are the first surveys on the sky with high complementary in resolution and temporal cadence.

LSST is a deep, multi-band optical imaging survey conducted by the Vera C. Rubin Observatory, located on Cerro Pachón in northern Chile. Its southern-hemisphere location enables continuous access to a large, contiguous region of the southern sky. Over a planned ten-year survey, LSST will repeatedly image approximately 18,000~deg$^2$ of sky across six optical bands ($u, g, r, i, z, y$), with a typical revisit cadence of a few days. The telescope features an effective aperture of 6.7~m and a 9.6~deg$^2$ field of view, producing images with a typical seeing-limited angular resolution of $\sim$0.7~arcsec. 

The strengths of LSST lie in its exceptional depth, temporal coverage, and statistical power, making it particularly well suited for studies of large-scale structure, weak gravitational lensing, and transient and variable phenomena. However, as a ground-based survey, LSST observations are affected by atmospheric turbulence, variable seeing conditions, sky background, and weather-dependent systematics.

In contrast, Euclid~\cite{mellier2025euclid} is a space-based mission operated by the European Space Agency, designed to deliver high-resolution imaging and near-infrared photometry and spectroscopy over a wide area of the extragalactic sky. Free from atmospheric effects, Euclid provides stable and sharply defined point-spread functions, enabling precise galaxy shape measurements. Euclid carries two primary instruments: the VIS instrument, which provides high-resolution optical imaging in a broad visible band with a pixel scale of 0.1~arcsec, and the NISP instrument, which performs near-infrared imaging and slitless spectroscopy across the $Y$, $J$, and $H$ bands.

Euclid is expected to survey approximately 15,000~deg$^2$ of the extragalactic sky, avoiding regions of high Galactic extinction and stellar density. Its primary science objectives focus on probing dark energy and dark matter through weak gravitational lensing and galaxy clustering, leveraging high-fidelity shape measurements and well-controlled instrumental systematics. However, compared to LSST, Euclid operates with fewer photometric bands and a different observational cadence, and its near-infrared focus results in complementary—but incomplete—spectral information relative to ground-based optical surveys.

The survey footprints of LSST and Euclid overlap significantly, with an estimated common area of $\sim$7,000--9,000~deg$^2$, depending on final survey strategies and masking. In these overlapping regions, both surveys observe the same underlying astrophysical objects but produce fundamentally different data realizations due to differences in angular resolution, wavelength coverage, noise statistics, and survey strategy. To date, most strong-lensing searches have been conducted within single-survey data sets, with training sets and detection pipelines tailored to the characteristics of that specific survey. These systematic differences induce non-trivial distribution shifts between the two datasets. As a result, direct joint analysis is challenging: models and measurements developed for one survey cannot be naively transferred to the other, and simple deterministic mappings fail to capture the intrinsic ambiguities and uncertainties between observational modalities.

\subsection{Strong Gravitational Lensing}\label{sec:strong-lensing}
Strong gravitational lensing occurs when a massive foreground galaxy (the deflector) lies close to the line of sight to a more distant background source, bending the source light strongly enough to produce multiple images and high-magnification features. For extended background galaxies, this typically appears as multiple arcs or a nearly complete Einstein ring. If the background galaxy also hosts a compact point source—such as an AGN, quasar, or supernova—the lens can additionally produce multiple point-like images, commonly observed as “doubles” (two images) or “quads” (four images). Systems dominated by extended lensed galaxies are often referred to as galaxy–galaxy lenses, while those highlighted by multiply imaged point sources are described using the double/quad terminology.

Compared to ordinary galaxy detections, strong lenses are much rarer because they require a near alignment between the foreground deflector and the background source, among other conditions. As a result, only a small fraction of galaxies on the sky exhibit recognizable strong-lensing morphologies in survey images.

In this work, we treat strong gravitational lenses as representative rare events. Although their physical nature is well understood, we intentionally treat them as unknown anomalies at test time, and evaluate whether the model can detect them purely through inter-survey probabilistic inconsistency.

\subsection{Image Translation}
Image-to-Image (I2I) translation studies how to transform an image from a \emph{source} visual domain $\mathcal{X}$ to a \emph{target} visual domain $\mathcal{Y}$. Given $x\in\mathcal{X}$, the goal is to synthesize $y\in\mathcal{Y}$ that retains the underlying semantics (e.g., geometry, layout, identity) but adopts domain-specific appearance statistics (e.g., texture, color distribution, sensor characteristics). 

Early practical I2I systems were largely driven by GANs, including pix2pix~\cite{pix2pix2017}, SPADE~\cite{park2019SPADE}, and OASIS~\cite{oasis}. 
More recently, diffusion models have emerged as a leading framework for image-to-image translation due to their stable training dynamics and likelihood-based generative formulation. Denoising Diffusion Probabilistic Models (DDPMs)~\cite{ddpm} define a forward process that gradually perturbs data with Gaussian noise and learn the corresponding reverse denoising dynamics for sampling. Conditional diffusion extends this framework to translation tasks by conditioning the reverse process on a source image, achieving strong performance across a wide range of I2I applications. Palette~\cite{saharia2022palette} provides a unified conditional diffusion framework for multiple restoration-style I2I problems.

In conditional diffusion pipelines such as Palette, target images are generated by reversing a diffusion process initialized from pure Gaussian noise, while the source image is incorporated solely as an external conditioning signal. 
An alternative formulation is provided by bridge-based diffusion models, which directly model a stochastic transition between two domain distributions. Brownian Bridge~\cite{li2023bbdm,LyuConsecBB} defines a forward process between source and target domains, rather than between data and pure noise. This formulation leads to more efficient sampling by avoiding unnecessary diffusion through high-noise states. In our work, we show that training Brownian bridges with an appropriate weighting term not only yields high-quality reconstructions but also fits better to the underlying conditional distribution.

Finally, although GAN-based, diffusion-based, and bridge-based methods can all generate multiple outputs for a single input, they are typically developed and evaluated under a single-direction point-estimation paradigm and evaluated on deterministic metrics. In contrast, scientific applications require models that faithfully represent the full conditional distribution between observational modalities. This motivates our focus on probabilistic inter-domain modeling and bidirectional inference beyond conventional I2I evaluation protocols.

\subsection{Anomaly Detection}
One of the high-impact opportunities of next-generation surveys is the discovery of ``unknown unknowns''—rare astrophysical events that challenge current astrophysical or cosmological models. This task can be characterized as Anomaly Detection (AD). Current approaches in astrophysics predominantly focus on temporal anomalies (transients)~\cite{10.1093/mnras/staf1477, DEBONIS2026101049, Ishida2021ActiveADTimeDomain, gómez2024machinelearningdrivenanomalydetection}. The standard pipeline involves extracting light-curve features followed by classical outlier detection algorithms like Isolation Forests~\cite{isolatoinforest}. While effective for time-domain events, these methods struggle to identify structural static anomalies—such as galaxy-galaxy strong gravitational lenses and other unique morphological irregularities—which require structural semantic understanding rather than temporal feature extraction.

Industrial visual anomaly detection (e.g., defect inspection) has recently converged on diffusion-based reconstruction~\cite{deco-diff,omiad}: a diffusion model is trained on normal data and anomalies are detected via elevated reconstruction/denoising error in regions that deviate from in-distribution examples. Similar reconstruction-based diffusion paradigms have also shown promise in medical imaging settings~\cite{10.1007/978-3-031-16452-1_4}.

More recently, multimodal anomaly detection benchmarks have been proposed to integrate complementary sensory inputs, such as combining 2D images with 3D point clouds~\cite{wang2023multimodal,costanzino2024cross}. These methods typically rely on explicit feature fusion modules to aggregate modality-specific representations and define anomaly scores based on cross-modal discrepancy.

\section{Method}
\subsection{Survey Translation Formulation}
\label{sec:problem_formulation}
Let the intrinsic light omission process underlying an astronomical object be denoted by $\Phi$.
In practice, this process is not observed directly.
Instead, observations are produced through the telescope and detector via a sequence of survey-specific operations, including convolution with the point-spread function (PSF), spectral integration over a limited set of wavelength bands determined by the instrument filters, and the addition of stochastic noise sources.

We model the observation process of a given survey as
\begin{equation}
x = \mathcal{O}(\Phi) + \epsilon\!\left(\mathcal{O}(\Phi)\right),
\end{equation}
where $\mathcal{O}$ denotes the deterministic observation operator incorporating PSF convolution and bandpass integration, and $\epsilon(\cdot)$ represents signal-dependent noise, including photon noise as well as signal-independent components such as readout and instrumental noise.

We consider two astronomical surveys observing the same underlying astrophysical scene but with different instrumental and observational characteristics.
Rather than treating one survey as a source domain and the other as a target domain, we view the observations as two distinct realizations of a shared latent astrophysical process, shaped by survey-specific effects such as atmosphere, optics, point-spread functions, noise statistics, and spectral bandpasses.
Formally, the two observations can be written as
\begin{equation}
\begin{aligned}
x_{\mathrm{Euclid}} &= \mathcal{O}_{\mathrm{Euclid}}(\Phi)
+ \epsilon_{\mathrm{Euclid}}\!\left(\mathcal{O}_{\mathrm{Euclid}}(\Phi)\right), \\
x_{\mathrm{LSST}}   &= \mathcal{O}_{\mathrm{LSST}}(\Phi)
+ \epsilon_{\mathrm{LSST}}\!\left(\mathcal{O}_{\mathrm{LSST}}(\Phi)\right).
\end{aligned}
\end{equation}

The latent process $\Phi$ is fundamentally unobservable: there exists no experiment, even in principle, that provides access to a noiseless, fully resolved, all-band representation of a distant astronomical object.
Therefore, we marginalize over the unobserved latent process $\Phi$ and focus on learning the relationship between different observational realizations.
Since each observation captures only a partial and noisy projection of the underlying scene, neither $x_{Euclid}$ nor $x_{LSST}$ fully determines the other.
The mapping between surveys is therefore inherently stochastic rather than deterministic.
We model the cross-survey relationship through the conditional distributions $p(x_{Euclid} \mid x_{LSST})$ and $p(x_{LSST} \mid x_{Euclid})$.

Our goal is to learn a single generative framework that jointly captures both conditionals, enabling bidirectional translation between surveys while preserving shared astrophysical structure and explicitly modeling the uncertainty induced by incomplete observations and signal-dependent noise.
Under this formulation, given an observation from either survey, the model can sample plausible realizations of how the same object would appear if observed by the other survey.

\subsection{AS-Bridge Framework}
\label{sec:bb_diffusion}

In this section, we first give a brief introduction to Brownian bridge diffusion and its training objectives used in previous work. We then derive a better training objective that balances between reconstruction fidelity and data likelihood which are equally important for probabilistic reconstruction problems. 

\subsubsection{Brownian Bridge}
\label{sec:brownian_bridge}

A Brownian bridge~\cite{oksendal2003stochastic} can be viewed as a Gaussian stochastic process obtained by conditioning Brownian motion on fixed endpoints.
In contrast to standard diffusion, which progressively destroys information from a single source, a Brownian bridge performs a stochastic interpolation between two observations.

Given two endpoints $(x_0, x_T)$, Brownian bridge motion between them is defined as

\begin{equation}
dx_t
=
\frac{x_0 - x_T}{1 - m_t}\,dt
+
dW_t,
\quad 0 \le t < T,
\end{equation}
where $W_t$ is the standard Brownian motion and $m_t = \frac{t}{T}$. The closed-form solution is

\begin{equation}
x_t
=
x_0
+
m_t(x_t - x_0)
+
\left(
W_t - m_t W_T
\right).
\end{equation}

The conditional distribution of $x_t$ given $x_0$ and $x_T$ is

\begin{equation}
x_t \mid (x_0, x_T)
\sim
\mathcal{N}
\left(
(1-m_t)x_0 + m_t x_T,\;
\delta_t I
\right),
\end{equation}
where $\delta_t = m_t(1-m_t)$.

The reverse transition is obtained from Bayes' theorem~\cite{ddpm, li2023bbdm}:
\begin{equation}
\begin{aligned}
p_\theta(x_{t-1}\mid x_t,x_1)
&=
\frac{q(x_t\mid x_{t-1},x_t)\,q(x_{t-1}\mid x_0,x_t)}
     {q(x_t\mid x_0,x_T)} \\
&=
\mathcal{N}\!\left(\tilde{\mu}_t,\ \delta_{t|t-1} I\right).
\end{aligned}
\end{equation}
where

\begin{equation}
\tilde{\mu}_t
=
a_{t}\,x_t
+
b_{t}\,x_T
-
c_{t}\Big(m_t (x_T-x_0) + \sqrt{\delta_t}\,\epsilon\Big),
\end{equation}
with coefficients
\begin{equation}
a_{t}
=
\frac{\delta_{t-1}}{\delta_t}\frac{1-m_t}{1-m_{t-1}}
+
\frac{\delta_{t|t-1}}{\delta_t}(1-m_t),
\end{equation}
\begin{equation}
b_{t}
=
m_{t-1}
-
m_t\frac{1-m_t}{1-m_{t-1}}\frac{\delta_{t-1}}{\delta_t},
\end{equation}
\begin{equation}
c_{t}
=
(1-m_t)\frac{\delta_{t|t-1}}{\delta_t},
\end{equation}
and variance schedule
\begin{equation}
\delta_{t|t-1}
=
\delta_t
-
\delta_{t-1}\frac{(1-m_t)^2}{(1-m_{t-1})^2}.
\end{equation}

Previous works using Brownian bridge~\cite{ddpm, li2023bbdm} optimize the following loss
\begin{equation}
\mathcal{L}_
=
\mathbb{E}_{x_0,x_T,t,\epsilon}
\left[
\left\|
\Big(m_t (x_T-x_0) + \sqrt{\delta_t}\,\epsilon\Big)
-
f_\theta(x_t,x_T,t)
\right\|_2^2
\right].
\end{equation}
which is the drift term plus the denoising term in a single timestep.
\subsubsection{Maximum Likelihood Training}

Song et al.~\cite{song2021maximum} showed that score-based diffusion models can be trained by maximum likelihood through a properly weighted denoising score matching objective. 
Under variance–exploding (VE) diffusion, the likelihood objective leads to a timestep-dependent weighting proportional to $\delta_t$. 
While the data is gradually blended so the total variance is preserved in standard DDPM~\cite{ho2020denoising}, VE diffusion defines the diffusion process where the data variance preserve so the total variance increases as the noise scale $\delta_t$ grows. Since the source and target observations are two different realizations of the same underlying astronomical object, we assume that the marginal data variance of the two domains is identical, which matches the VE assumption.

Both Song et al.~\cite{song2021maximum} and Ho et al.~\cite{ddpm} observed that introducing this weighting often results in slightly worse FID. However, Song et al. demonstrated that this weighting significantly improves the estimated data likelihood.

For scientific problems involving probabilistic reconstruction, our goal is to ensure that the learned reverse process matches the conditional probability distribution. Therefore, we prefer an objective that better aligns with maximum likelihood as well as perceptual quality.

However, directly using $\delta_t$ as the weighting term in Brownian bridge training leads to vanishing weights at both endpoints of the bridge. When optimizing for guaranteed maximum likelihood, such vanishing weights prevent the model from being effectively trained at the endpoints. 

To mitigate this issue, we propose to train the Browian bridge with an $\epsilon$-prediction loss. 
We formally prove that the $\epsilon$-prediction loss is  equivalent to the standard loss times a milder weight $\sqrt{\delta_t}$, which preserves the likelihood-inspired emphasis on high-noise timesteps while maintaining stable gradients across the bridge. 
Empirical results in Tab.~\ref{tab:energy_score} show that $\epsilon$-prediction yields the best reconstruction performance under uncertainty.
\begin{proposition}
Training with $\epsilon$-prediction loss is equivalent to training with the loss define in Eq. (12) with a milder weighting term $\sqrt{\delta_t}$.
\end{proposition}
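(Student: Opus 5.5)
The plan is to show that the two objectives differ only by an affine reparametrization of the network output, and then to read the weight off the Jacobian of that map. I will use the closed-form marginal of the bridge,
\begin{equation*}
x_t=(1-m_t)x_0+m_t x_T+\sqrt{\delta_t}\,\epsilon ,
\end{equation*}
and note that the regression target of Eq.~(12) is
\begin{equation*}
y_t:=m_t(x_T-x_0)+\sqrt{\delta_t}\,\epsilon = x_t-x_0 .
\end{equation*}
The network sees $(x_t,x_T,t)$, so predicting $y_t$, predicting $x_0$, and predicting $\epsilon$ are interchangeable. Given any prediction $f_\theta$ of $y_t$, I set $\hat x_0=x_t-f_\theta$ and define the induced noise prediction $\epsilon_\theta=\bigl(x_t-(1-m_t)\hat x_0-m_t x_T\bigr)/\sqrt{\delta_t}$. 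Conversely, any $\epsilon_\theta$ determines a unique $f_\theta$. The first step is to check that this correspondence is a bijection for $0<t<T$. This makes ``training with $\epsilon$-prediction'' a statement about the same function class as Eq.~(12), only with a different per-timestep loss.

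The second step is to substitute the parametrization into $\|\epsilon-\epsilon_\theta\|_2^2$ and express the residual in terms of $y_t-f_\theta$. Because the map is affine in $x_0$ (equivalently in $y_t$) with a coefficient depending only on $t$, the residuals are proportional:
\begin{equation*}
\epsilon-\epsilon_\theta=\lambda_t\,(y_t-f_\theta).
\end{equation*}
The $\epsilon$-loss is therefore a timestep-reweighted version of Eq.~(12), and everything reduces to identifying $\lambda_t$. I would also split the target in Eq.~(12) into its drift part $m_t(x_T-x_0)$ and its stochastic part $\sqrt{\delta_t}\,\epsilon$. On the stochastic part the residual in $y$-space is exactly $\sqrt{\delta_t}$ times the residual in $\epsilon$-space. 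I will use this to match the per-sample error scale to the stated factor $\sqrt{\delta_t}$, and then compare it with the maximum-likelihood weight $\delta_t$ of Song et al.\ discussed above. The point of that comparison is that $\sqrt{\delta_t}\ge\delta_t$ on $[0,1]$ and the ratio $\sqrt{\delta_t}/\delta_t=1/\sqrt{\delta_t}$ blows up as $\delta_t\to0$, so the endpoint weights are suppressed much less than under the pure likelihood weight.

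I expect the main obstacle to be the bookkeeping of $\lambda_t$. A naive substitution gives $\lambda_t=(1-m_t)/\sqrt{\delta_t}$ when the deterministic drift is absorbed into $\hat x_0$. That yields a squared weight $(1-m_t)^2/\delta_t$, not $\sqrt{\delta_t}$. So I must fix precisely which parametrization of $f_\theta$ the ``standard loss'' refers to, and whether ``weighting'' means scaling the residual norm or scaling the squared loss. My first attempt will be to treat the drift term as known or separately parametrized, as in the drift-plus-denoising decomposition of Eq.~(12), and to interpret the weight as multiplying the residual norm. Under that reading the stochastic-component identity above gives the factor $\sqrt{\delta_t}$ directly. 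If that reading is not the intended one, I would instead state the exact $\lambda_t^2$ obtained from the substitution and argue that it plays the same qualitative role, namely a weight that is bounded away from the vanishing $\delta_t$ profile at the endpoints.
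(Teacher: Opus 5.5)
Your plan cannot produce $\sqrt{\delta_t}$. The obstacle you flag is not a matter of bookkeeping; it is a hard constraint. From the marginal, $\epsilon=\bigl(m_t(x_t-x_T)+(1-m_t)\,y_t\bigr)/\sqrt{\delta_t}$ is affine in $y_t$ with slope $(1-m_t)/\sqrt{\delta_t}$. So any correspondence between $f_\theta$ and $\epsilon_\theta$ that sends target to target gives
\[
\|\epsilon-\epsilon_\theta\|_2^2=\frac{(1-m_t)^2}{\delta_t}\,\|y_t-f_\theta\|_2^2=\frac{1-m_t}{m_t}\,\|y_t-f_\theta\|_2^2 .
\]
Your ``naive'' $\lambda_t$ is therefore the only answer. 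Read literally against Eq.~(12), the $\epsilon$-loss is Eq.~(12) with weight $(1-m_t)/m_t$.

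Neither your escape route nor your fallback works:
\begin{itemize}
\item The drift $m_t(x_T-x_0)$ contains the target $x_0$, so treating it as known trivializes the problem. It cannot be parametrized separately either. With $\hat x_0=x_t-f_\theta$, the drift and noise residuals are $m_t(\hat x_0-x_0)$ and $(1-m_t)(\hat x_0-x_0)$, both fixed by the single error $y_t-f_\theta=\hat x_0-x_0$.
\item Even on the stochastic part alone, $\sqrt{\delta_t}\,\epsilon-\sqrt{\delta_t}\,\epsilon_\theta=\sqrt{\delta_t}(\epsilon-\epsilon_\theta)$ says the $\epsilon$-residual is $1/\sqrt{\delta_t}$ times the $y$-residual. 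Identifying this with ``$\sqrt{\delta_t}$'' reverses the direction of the claimed equivalence: it up-weights the bridge endpoints relative to Eq.~(12) instead of down-weighting them.
\item The fallback proves a different statement. The weight $(1-m_t)/m_t$ diverges as $t\to0$, and as $t\to T$ it vanishes at the same linear rate as $\delta_t$. So it is neither $\sqrt{\delta_t}$ nor bounded away from the $\delta_t$ profile at the endpoints.
\end{itemize}

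The missing idea is the choice of comparison loss. The paper's proof never uses the $y_t$-parametrization of Eq.~(12). Instead it compares $\epsilon$-prediction with weighted denoising score matching, using $\nabla_{x_t}\log p(x_t)=-\epsilon/\sqrt{\delta_t}$ and the implied score $s_\theta=-\epsilon_\theta/\sqrt{\delta_t}$. Then $s_\theta-\nabla_{x_t}\log p(x_t)=(\epsilon-\epsilon_\theta)/\sqrt{\delta_t}$, and $\sqrt{\delta_t}$ is exactly the factor that converts the score residual into the $\epsilon$-residual.

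This identity is exact only when $\sqrt{\delta_t}$ multiplies the residual inside the norm: $\|\sqrt{\delta_t}\,(s_\theta-\nabla_{x_t}\log p(x_t))\|_2^2=\|\epsilon_\theta-\epsilon\|_2^2$, which is weight $\delta_t$ on the squared loss. With $\sqrt{\delta_t}$ outside the square, as the paper writes it, substitution gives $\delta_t^{-1/2}\|\epsilon_\theta-\epsilon\|_2^2$. So your ``weight on the residual'' reading is the one that makes the identity hold, but it must be applied to the score-matching loss, not to Eq.~(12).
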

\begin{proof}
From Eq.~(5), the forward process satisfies
\begin{equation}
x_t = \mu_t + \sqrt{\delta_t},\epsilon,
\end{equation}
where $\epsilon \sim \mathcal{N}(0, I)$.

The true score of $x_t$ is
\begin{equation}
\nabla_{x_t} \log p(x_t)
=
-\frac{x_t - \mu_t}{\delta_t}
=
-\frac{\epsilon}{\sqrt{\delta_t}}.
\end{equation}

If the network predicts noise $\epsilon_\theta(x_t, t)$, the implied score estimate is
\begin{equation}
s_\theta(x_t, t)
=
-\frac{\epsilon_\theta(x_t, t)}{\sqrt{\delta_t}}.
\end{equation}

Starting from the weighted denoising score matching loss
\begin{equation}
\mathcal{L}
=
\sqrt{\delta_t}
\left|\left|
s_\theta(x_t, t)
-
\nabla_{x_t} \log p(x_t)
\right|\right|^2_2,
\end{equation}
and substituting the expressions in terms of $\epsilon$, we obtain
\begin{equation}
\mathcal{L}
=
\sqrt{\delta_t}
\left|\left|
-\frac{\epsilon_\theta}{\sqrt{\delta_t}}
+
\frac{\epsilon}{\sqrt{\delta_t}}
\right|\right|_2^2
=
\left|\left|
\epsilon_\theta - \epsilon
\right|\right|_2^2.
\end{equation}

\end{proof}

With $\epsilon$-prediction, the estimated target at each timestep $\hat{x}_0$ can be derived based on Eq.(13)
\begin{equation}
    \hat{x}_0 = \frac{x_t-m_tx_T-\sqrt{\delta_t}\epsilon_\theta(x_t,x_T,t)}{1-m_t}
\end{equation}

\subsection{Rare event detection}
\label{sec:anomaly_detection}
We leverage the trained inter-survey generative model as an unsupervised anomaly detector.
The core assumption is that rare or previously unseen astrophysical phenomena are underrepresented in the training distribution and therefore cannot be faithfully reconstructed by the model.

Given paired observations $(x_{Euclid}, x_{LSST})$ from two surveys, we generate an intermediate variable $x_t$
 by stochastically fusing the two observations according to the Brownian bridge forward process defined in Eq.~(5). Starting from this noisy fusion, we progressively apply the learned reverse dynamics to reconstruct the original observations in each survey domain. Since the prediction of LSST observations is more uncertain than the prediction of Euclid, we use only the Euclid reconstruction error as the anomaly score.

Unlike previous reconstruction-based anomaly detection methods that directly compare a single reconstruction with the ground truth, astronomical images typically exhibit low signal-to-noise ratios, causing reconstruction errors to be dominated by mismatches in random noise rather than meaningful structural deviations.
To mitigate this issue, we sample multiple reconstructions from the same noisy fusion.
Given $N$ stochastic reconstructions $\{\hat{x}_0^{(i)}\}_{i=1}^N$, we define a pixel-wise anomaly map as
\begin{equation}
\mathcal{A}(p)
=
\min_{i \in \{1,\dots,N\}}
\left\|
\hat{x}_0^{(i)}(p) - x_0(p)
\right\|_2^2,
\end{equation}
and the image-level anomaly score is obtained by flux-normalized aggregation, avoiding bias toward brighter objects,
\begin{equation}
\mathcal{A}(x_0)
=
\frac{\sum_{p} \mathcal{A}(p)}{\sum_{p} x_0(p)}.
\end{equation}

By taking the minimum error across multiple stochastic reconstructions, this score suppresses spurious errors caused by noise fluctuations while remaining sensitive to systematic reconstruction failures.

\section{Experiments}
\subsection{Dataset Generation}
Euclid data products are becoming publicly available, while Rubin has revealed first imagery and early previews, with LSST science operations approaching but not yet represented by a fully mature public survey stream. To enable a controlled and survey-comparable development setting for cross-survey learning, we construct a forward-modeled dataset of galaxy observations under Euclid (VIS band) and LSST ($g,r,i$ bands) imaging. The dataset is built from astrophysically motivated population synthesis and rendered through survey-specific instrument transfer functions.

Using \textsc{SLSim}~\cite{slsim_github_2025}, we generate two primary image sets: regular galaxies and galaxy--galaxy strong-lens systems. Regular galaxies are drawn directly from the simulated galaxy catalogs (after basic selection cuts on, e.g., redshift and magnitude). Galaxy--galaxy lenses are produced by pairing foreground deflectors with background sources from the same catalogs and forward-modeling the resulting lensing configurations, retaining only systems that satisfy our detectability criteria (e.g., sufficient image separation and brightness). 

We additionally construct an auxiliary HST anomaly population for testing structural anomaly detection beyond strong lenses.

In total, we generate $115{,}000$ regular galaxies, $5{,}000$ galaxy--galaxy strong lenses, and 396 HST-template anomalies. We use $110{,}000$ regular galaxies for training, and reserve the remaining regular galaxies together with all lens and HST anomaly systems for evaluation and downstream tasks.

The simulation setting is deliberate: a mature public LSST survey
stream is not yet available, and confirmed labels for rare real events
are sparse and incomplete. Forward modeling provides paired obser-
vations and controlled ground truth for benchmarking inter-survey
inference. Once matched LSST–Euclid cutouts are available in over-
lapping sky regions, the same training recipe can be applied directly
to real observations

\subsubsection{Cosmology Configuration}
We assume a flat $\Lambda$CDM cosmology with $H_0 = 70~\mathrm{km\,s^{-1}\,Mpc^{-1}}$ and $\Omega_m = 0.3$. This fixes the angular-diameter distances that set the lensing geometry and the characteristic angular scale of Einstein radii.

\subsubsection{Regular Galaxy Population}\label{sec:galaxy}

We generate mock galaxy catalogs using \texttt{SkyPy}~\cite{Amara2021} over $z\in[0,5]$ with $\Delta z=0.1$ and limiting magnitude $m_{\mathrm{lim}}=30$ while sampling from an effective sky area of $0.1~\mathrm{deg}^2$.

We synthesize two physically interpretable galaxy populations: blue (late-type) and red (early-type) systems.

For each galaxy, we sample its redshift and absolute magnitude from redshift-evolving Schechter luminosity functions, assign a spectral energy distribution using Dirichlet mixtures of \texttt{kcorrect}-style templates, compute stellar mass and apparent magnitudes in Euclid and LSST bands, and draw structural properties (e.g., angular size and ellipticity) from population-dependent statistical priors.

This produces realistic regular galaxies with physically grounded photometry and morphology.
We instantiate two catalogs for different purposes: one catalog is used to render the regular galaxy samples for the dataset, while a second catalog is used as the parent population from which deflectors and sources are selected for constructing strong lens systems.

\subsubsection{Strong Lens Population}
Using the second catalog, we use the \textsc{SLSim} to construct galaxy--galaxy strong lens systems by explicitly separating \emph{deflectors} and \emph{sources} according to astrophysical priors. 

\textbf{Deflectors:} drawn from red and blue populations, filtered with $0.1 < z_d < 2$ and $i < 24$.

\textbf{Sources:} drawn from the blue population, filtered with $0.1 < z_s < 5$ and $i < 26$, and modeled as single Sérsic extended sources.

\textbf{Lens systems:} rendered over a nominal sky area of $10~\mathrm{deg}^2$, with image separation $0.8'' < \Delta\theta < 10''$, and a brightness cut applied to the second-brightest lensed image.

This produces a physically realistic distribution of galaxy--galaxy strong lenses consistent with cosmology and galaxy evolution.

\subsubsection{HST Anomaly Population}
To evaluate whether inter-survey inconsistency can detect structural anomalies beyond strong gravitational lenses, we construct an auxiliary HST-template anomaly population. We start from 396 anomalous HST cutouts spanning 15 morphological classes identified by an anomaly-search pipeline. Each HST anomaly is used as a high-resolution morphology template and forward-rendered through the same LSST and Euclid instrument transfer functions used for the regular and lens populations. We retain objects whose anomalous morphology remains visible after rendering in both survey domains. This yields paired LSST--Euclid anomaly examples with controlled cross-survey ground truth while preserving morphological diversity from real HST observations. These examples are used only for evaluation.

\subsubsection{Cross-survey Bandpass Harmonization}

To ensure photometric consistency across surveys, we harmonize bandpass definitions using \texttt{speclite}~\cite{speclite_zenodo_2024}. We adopt Euclid VIS and LSST 2016 bandpasses ($\text{lsst2016-g}$, $\text{lsst2016-r}$, $\text{lsst2016-i}$).\footnote{Because the Euclid VIS response contains extremely small non-zero values outside its effective wavelength range, which interferes with simulations over $z\in[0,5]$, we truncate the VIS filter to the official Euclid wavelength range of $5500$--$9000$ \AA. This avoids introducing non-physical flux contributions during SED integration.}

\subsubsection{Multi-survey Image Rendering (Euclid VIS, LSST $g,r,i$)}

Regular galaxies, lens systems, and HST anomaly templates are rendered into $64\times64$ pixel cutouts using survey-specific instrument models. \footnote{To avoid truncating extended lensing features in the lower-resolution LSST images (where the broader PSF and coarser sampling spread the signal over a larger apparent area), we choose the cutout field of view to be sufficiently large for LSST instead of the same FOV of Euclid.}

For each object, we render a Euclid VIS image and LSST $g$, $r$, and $i$ single-band images.

This produces paired cross-survey observations of the same underlying astrophysical object, observed through distinct instrument transfer functions (PSF, pixel scale, exposure, noise).

\subsection{Qualitative Evaluation and Potential Scientific Impact}
In this section, we qualitatively show the reconstruction examples of different start and end points of AS-Bridge and discuss how they may benefit downstream scientific analysis.
\subsubsection{LSST$\rightarrow$ Euclid} LSST observations are frequently affected by atmospheric seeing and PSF broadening, which cause nearby galaxies to merge into a single apparent source in crowded regions. In contrast, Euclid VIS provides significantly higher spatial resolution from space, clearly separating these systems and revealing their true multi-object structure. Galaxy demographics and weak lensing shape measurements rely critically on accurate measurements of galaxy counts, positions, and morphologies. When blended systems in LSST are misidentified as single objects, systematic biases are introduced into number counts and downstream population statistics. As shown in Fig.~\ref{fig:probabilistic_lsst_euclid} inferring from LSST to Euclid, AS-Bridge can separate blended systems and reconstruct a physically consistent multi-object scene from LSST observations, enabling more reliable demographic analysis on LSST-only sky region.

\subsubsection{Euclid $\rightarrow$ LSST}Reconstructing LSST multi-band color information from a single Euclid VIS observation is an extremely ill-posed problem, as Euclid collapses most spectral information into a broad band. However, Euclid still provides strong morphological cues and near-infrared intensity patterns that are statistically correlated with optical color. AS-Bridge learns this implicit conditional relationship between morphology, near-infrared brightness, and multi-band optical appearance across the two surveys.

As shown in Fig.~\ref{fig:probabilistic_euclid_lsst}, the model consistently preserves the morphological structure across multiple stochastic realizations while producing color variations that remain centered around the ground truth LSST observation. The diversity of these samples reflects the inherent uncertainty of the task, yet the ground truth lies well within the distribution of generated reconstructions. Tab.~\ref{tab:energy_score} quantitatively evaluates the quality of this reconstruction. This capability enables the analysis of plausible LSST band information for Euclid-only detections under uncertainty.

\subsubsection{LSST $\leftrightarrow$ Euclid} AS-Bridge learns to reconstruct typical in-distribution galaxy morphologies from the noisy fusion of LSST and Euclid observations. As shown in Fig.~\ref{fig:lens}, for common systems such as high-inclination disk galaxies, the model accurately recovers the elongated structure, demonstrating that it successfully captures the dominant galaxy population.

However, for rare galaxy--galaxy strong-lensing systems, the model exhibits a systematic reconstruction failure. This behavior arises because AS-Bridge is trained predominantly on regular galaxies and therefore learns the high-density modes of the regular-galaxy prior, such as centrally concentrated brightness profiles and smooth curvature patterns. Cross-survey translation is ill-posed, so the model balances consistency with the source observation against likelihood under the target-survey distribution. Rare lensing structures occupy low-density regions of this distribution and can be projected toward nearby high-likelihood regular morphologies during reconstruction.

In Fig.~\ref{fig:lens}, the model reconstructs the arc of a partially observable Einstein ring as a three point-source system. This discrepancy between the true Euclid observation and the AS-Bridge Euclid reconstruction provides a natural signal for anomaly and rare-event detection.
\begin{figure}[t]
    \centering
    \includegraphics[width=0.45\textwidth]{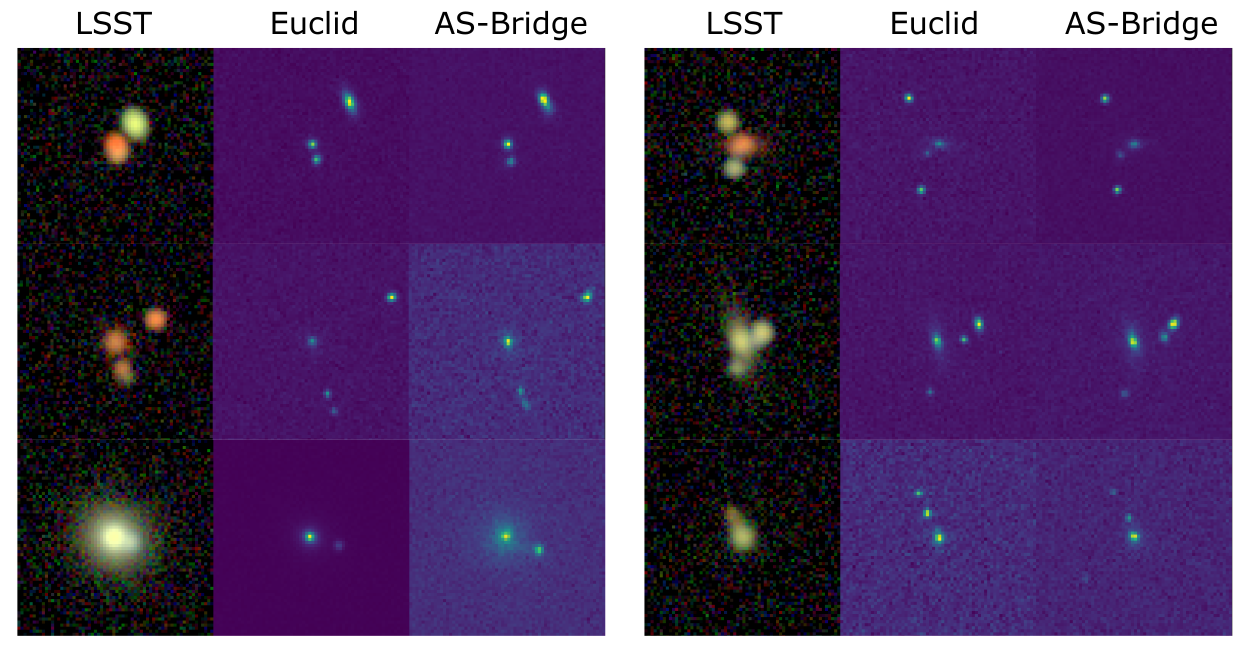}
    \vspace{-3mm}
    \caption{
    LSST-to-Euclid translation. Left: LSST where nearby galaxies are strongly blended due to atmospheric seeing and PSF broadening. Middle: Euclid observation revealing the true multi-object structure at higher spatial resolution. Right: AS-Bridge reconstruction from the LSST input, correctly recovering the number of galaxies and accurately localizing their centers.
    }
    \Description{Three-panel comparison of a blended LSST galaxy cutout, the corresponding sharper Euclid observation, and the AS-Bridge Euclid reconstruction.}
    \label{fig:probabilistic_lsst_euclid}
\end{figure}

\subsection{Quantitative Evaluation}
While the qualitative examples above illustrate the potential scientific impact of AS-Bridge, we next formalize this evaluation through quantitative metrics and comparisons with competing methods. These experiments establish a benchmark for future methods developed under our problem formulation and dataset.

\subsubsection{Metrics}  Evaluating AI models for scientific discovery requires metrics that address domain-specific constraints beyond standard computer vision tasks. First, telescope observation is inherently stochastic; thus, a model must not only minimize error but also correctly quantify observational uncertainty. Second, in the search for rare phenomena, the metric must reflect the practical workflow of domain experts, where the cost of verifying false positives is the limiting factor. Accordingly, we assess our framework using two complementary sets of metrics: Probabilistic Reconstruction Quality to validate physical calibration, and Detection Metrics to evaluate discovery efficiency.

To evaluate the probabilistic reconstruction quality of our Brownian Bridge model, we adopt the \emph{Continuous Ranked Probability Score} (CRPS)~\cite{Gneiting01032007}, a strictly proper scoring rule for probabilistic predictions.

\begin{figure}[t]
    \centering
    \includegraphics[width=0.45\textwidth]{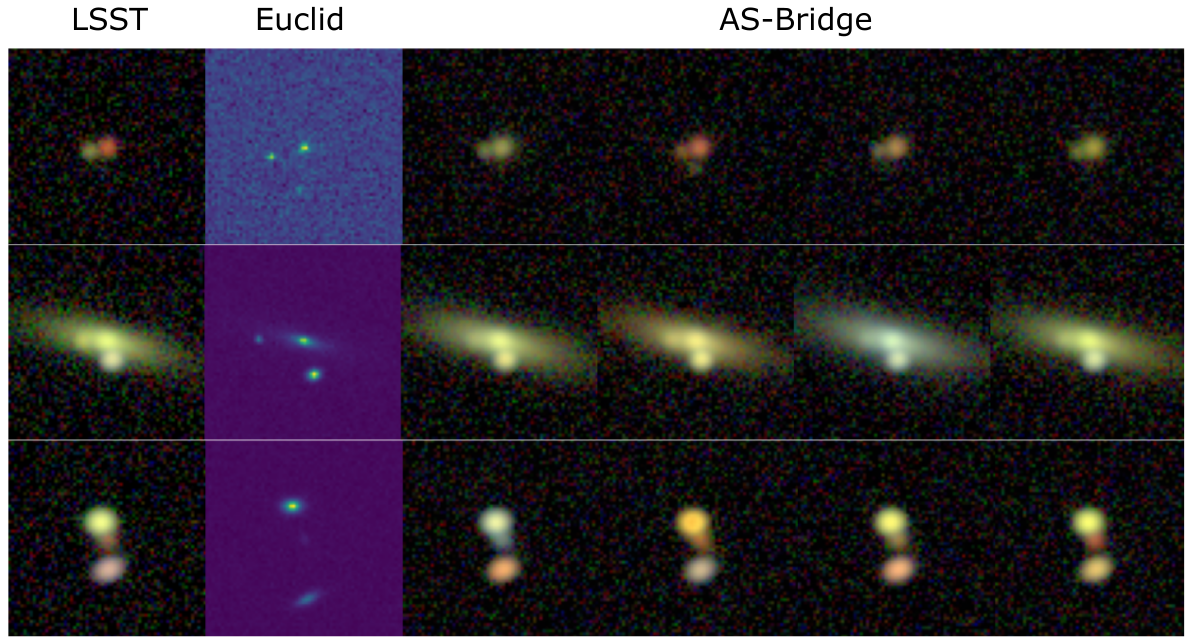}
   \vspace{-3mm}
    \caption{
   Euclid-to-LSST translation with multiple realizations. From left to right: LSST observation, Euclid  observation, and four stochastic LSST reconstructions generated by AS-Bridge from the Euclid input. While Euclid provides high-resolution morphology but limited spectral information, AS-Bridge reconstructs plausible LSST multi-band appearances that preserve the underlying structure and produce color variations consistent with the ground truth.
 }
    \Description{Row of image cutouts comparing a ground-truth LSST observation, the Euclid input, and multiple stochastic LSST reconstructions generated by AS-Bridge.}
    \label{fig:probabilistic_euclid_lsst}
\end{figure}

Given a ground-truth target $\mathbf{y} \in \mathbb{R}^d$ and $M$ samples $\{\hat{\mathbf{y}}^{(m)}\}_{m=1}^M$ drawn from the model’s predictive distribution, CRPS is defined as
\begin{equation}
\mathrm{CRPS} =
\frac{1}{M} \sum_{m=1}^{M} \left\| \hat{\mathbf{y}}^{(m)} - \mathbf{y} \right\|_2
\;-\;
\frac{1}{2M^2} \sum_{m=1}^{M} \sum_{m'=1}^{M}
\left\| \hat{\mathbf{y}}^{(m)} - \hat{\mathbf{y}}^{(m')} \right\|_2 .
\end{equation}

where $M$ is the total number of samples drawn from the stochastic prediction model and $\hat{y}^{(m)}$ is the $m$-th sample drawn. The first term measures accuracy with respect to the observed sample, while the second term penalizes lack of diversity among generated samples, thereby discouraging mode collapse. Because CRPS is strictly proper, it is minimized only when the predictive distribution matches the target distribution; overdispersed predictions are also penalized because excessive diversity increases the sample-to-truth distance in the first term. Rank-based calibration diagnostics are less direct for image-valued outputs because they require a non-unique projection of high-dimensional samples to an ordering, whereas CRPS evaluates the predictive distribution in the native image space.

CRPS has been widely adopted in scientific prediction tasks that require principled uncertainty quantification, including weather forecasting~\cite{rasp2023weatherbench} and probabilistic segmentation of ambiguous medical images~\cite{kohl2018probabilisticunet}.

\begin{figure}[t]
    \centering
    \includegraphics[width=0.45\textwidth]{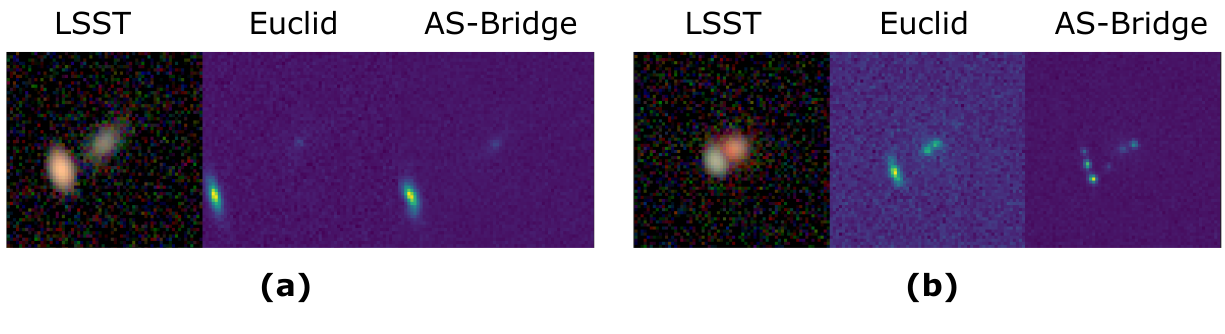}
    \vspace{-5mm}
    \caption{
    Midpoint-to-Euclid AS-Bridge reconstructions for (a) a system containing a common in-distribution edge-on disk galaxy and (b) a rare galaxy–galaxy strong-lensing system. From left to right: LSST , Euclid , and AS-Bridge reconstruction of Euclid from midpoint. Although the two cases appear visually similar, AS-Bridge faithfully reconstructs the elongated disk morphology in (a), while in (b) it fails to reproduce the extended arc. This facilitates rare event detection.
    }
    \Description{Two qualitative examples comparing LSST input, Euclid observation, and midpoint-to-Euclid AS-Bridge reconstructions for a regular edge-on galaxy and a rare strong-lensing system.}
    \label{fig:lens}
\end{figure}

In the context of astronomical anomaly detection—specifically the identification of rare phenomena  in large-scale surveys—the evaluation priorities differ significantly from standard industrial or safety-critical settings. 

In industrial defect detection, the goal is typically Recall-oriented (e.g., ensuring no defective part leaves a factory), prioritizing metrics such as the False Positive Rate (FPR) at high True Positive Rates (TPR) (e.g., 95\%). However, searching for rare scientific objects in massive datasets is Precision-oriented. The primary constraint is the limited ``time budget'' of domain experts who must verify candidates. A model that achieves high recall but floods the user with false positives is practically unusable in this domain.

To reflect this ``Needle-in-a-Haystack'' workflow, we adopt a set of discovery-focused metrics:

\begin{itemize}
    
    \item FPR@lowTPR: While industrial anomaly detection benchmarks often report the false positive rate (FPR) at 95\% true positive rate (TPR), such operating points are impractical for scientific discovery, where recovering nearly all anomalies would require scientists to sift through an overwhelming number of false positives. 
 Instead, we report the FPR at 1\% and 5\% TPR. This metric measures the fraction of regular objects that scientists would need to inspect in order to recover 1\% or 5\% of the true anomalies, reflecting a realistic early-discovery regime where verification effort must remain minimal.
    
    \item Area Under the Precision-Recall Curve (AUPR): Given the extreme class imbalance inherent to astronomical surveys, the standard AUROC can be misleadingly optimistic due to the massive number of True Negatives. We prioritize AUPR, which focuses exclusively on the positive class and penalizes false positives heavily, providing a more robust measure of global detection performance.
\end{itemize}

\begin{table*}[t]
    \centering
    \scriptsize
    \setlength{\tabcolsep}{2.2pt}
    \caption{Bidirectional translation metrics.}
    \label{tab:energy_score}
    \resizebox{\textwidth}{!}{%
    \begin{tabular}{lrrrrrrrrrr}
        \toprule
        \multirow{2}{*}{\textbf{Method}} &
        \multicolumn{5}{c}{\textbf{LSST $\to$ Euclid}} &
        \multicolumn{5}{c}{\textbf{Euclid $\to$ LSST}} \\
        \cmidrule(lr){2-6} \cmidrule(lr){7-11}
        & CRPS$\downarrow$ & PSNR$\uparrow$ & SSIM$\uparrow$ & MAE$\downarrow$ & RMSE$\downarrow$
        & CRPS$\downarrow$ & PSNR$\uparrow$ & SSIM$\uparrow$ & MAE$\downarrow$ & RMSE$\downarrow$ \\
        \midrule
        \multicolumn{11}{l}{\textit{Non-diffusion baselines}} \\
        SPADE~\cite{park2019SPADE} & 3.39 & 16.77 & 0.56 & 0.110 & 0.146 & 16.52 & 11.09 & 0.06 & 0.260 & 0.392 \\
        OASIS~\cite{oasis} & 4.65 & 16.78 & 0.57 & 0.110 & 0.146 & 13.33 & 11.34 & 0.10 & 0.250 & 0.352 \\
        pix2pix~\cite{pix2pix2017} & 4.35 & 16.95 & 0.60 & 0.110 & 0.143 & 73.03 & 12.81 & 0.07 & 0.210 & 0.289 \\
        \midrule
        \multicolumn{11}{l}{\textit{Diffusion baselines}} \\
        Palette~\cite{saharia2022palette} & 2.43 & 31.94 & 0.75 & 0.020 & 0.034 & 7.98 & 20.22 & 0.35 & 0.075 & 0.098 \\
        RePaint~\cite{9880056} & 3.14 & 11.23 & 0.22 & 0.220 & 0.289 & 15.15 & 10.84 & 0.05 & 0.250 & 0.290 \\
        \midrule
        \multicolumn{11}{l}{\textit{AS-Bridge w/ different training objectives}} \\
        $m_t(x_T - x_0) + \sqrt{\delta_t}\epsilon$
        & 2.55 & 32.59 & 0.76 & 0.019 & 0.033
        & \textbf{7.90} & 20.41 & 0.36 & \textbf{0.074} & \textbf{0.096} \\
        $\sqrt{\delta_t}\epsilon$
        & 3.59 & 31.38 & 0.75 & 0.026 & 0.046
        & 11.24 & 20.38 & 0.36 & 0.075 & 0.097 \\
        $\epsilon$
        & \textbf{2.38} & \textbf{32.90} & \textbf{0.77} & \textbf{0.018} & \textbf{0.031}
        & \textbf{7.90} & \textbf{20.42} & \textbf{0.37} & \textbf{0.074} & \textbf{0.096} \\
        \bottomrule
    \end{tabular}
    }
\end{table*}

\subsubsection{Competing Methods and Results}
\label{sec:ablation}
For the inter-survey translation task, we compare AS-Bridge of different training objectives with several other representative generative image translation methods. 
SPADE~\cite{park2019SPADE} is a spatially-adaptive normalization framework that injects semantic layout information into the generator. 
OASIS~\cite{oasis} is an adversarial semantic image synthesis method that replaces the discriminator with a segmentation network to stabilize training. 
pix2pix~\cite{pix2pix2017} is a conditional GAN for paired image-to-image translation. 
Palette~\cite{saharia2022palette} is a conditional diffusion model that uses the reference image as an additional conditioning signal.

In addition to these baselines, we adapt RePaint~\cite{9880056}, a diffusion inpainting method, to the inter-survey setting by treating the observed survey component as known and reconstructing the target survey component through reverse denoising.

Table~\ref{tab:energy_score} reports both CRPS and standard deterministic metrics for bidirectional translation. AS-Bridge with $\epsilon$-prediction achieves the best probabilistic reconstruction performance in both directions while remaining strongest or competitive under deterministic image metrics. Additional architecture, training, data-size, and sampling-cost details are provided in Appendix~\ref{app:implementation} and~\ref{app:ablations}.

For the anomaly detection task, we compare AS-Bridge with state-of-the-art single-modal and multi-modal industrial anomaly detection methods, namely Deco-Diff~\cite{deco-diff} and Crossmodal Feature Mapping (CFM)~\cite{costanzino2024cross}.

CFM is originally designed for cross-modal learning between 2D images and 3D point clouds using separate modality-specific encoders. In our setting, we adapt CFM by employing two image encoders to process LSST and Euclid observations respectively.

Table~\ref{tab:rare_event_detection} reports the performance of the three rare-event detectors on both strong-lensing and HST-anomaly test cases. Deco-Diff, as a single-modal method, fails to identify these events. In contrast, both AS-Bridge and CFM achieve strong performance, highlighting the importance of incorporating multi-sensor information for astronomical rare-event detection.
Furthermore, AS-Bridge outperforms CFM. This improvement is likely due to AS-Bridge’s stochastic reconstruction mechanism.

\begin{table}[t]
    \centering
    \scriptsize
    \caption{Rare-event detection across two anomaly test cases.}
    \label{tab:rare_event_detection}
    \resizebox{\columnwidth}{!}{%
    \begin{tabular}{llccc}
        \toprule
        \textbf{Test case} & \textbf{Method} & \textbf{FPR@1\% TPR} $\downarrow$ & \textbf{FPR@5\% TPR} $\downarrow$ & \textbf{AUPR} $\uparrow$ \\
        \midrule
        \multirow{3}{*}{Strong lensing} & AS-Bridge & \textbf{0.00\%} & \textbf{0.18\%} & \textbf{0.80} \\
        & Deco-Diff~\cite{deco-diff} & 1.1\% & 5.0\% & 0.61 \\
        & CFM~\cite{costanzino2024cross} & 0.24\% & 1.2\% & 0.75 \\
        \midrule
        \multirow{3}{*}{HST anomaly} & AS-Bridge & \textbf{0.00\%} & \textbf{0.00\%} & \textbf{1.000} \\
        & Deco-Diff~\cite{deco-diff} & 45.71\% & 63.13\% & 0.3297 \\
        & CFM~\cite{costanzino2024cross} & \textbf{0.00\%} & \textbf{0.00\%} & 0.962 \\
        \bottomrule
    \end{tabular}
    }
\end{table}

\section{Limitations and Ethical Considerations}
This work is currently trained and evaluated only on simulated data, and a simulation-to-reality domain gap is expected when transferring the model to real LSST and Euclid observations. It should therefore be regarded as a proof-of-concept. Demonstrating effectiveness in a controlled simulation setting allows us to justify and propose subsequent training on real survey data once they become available, rather than applying the method directly to raw observations without prior validation. This simulation-first validation is a common and well-established practice in astronomy.

This work involves only astronomical data and publicly available survey specifications. No human subjects, personal data, or sensitive information are involved. We do not identify any direct ethical risks associated with this research.

\section{Conclusion}
We formulate a probabilistic inter-survey translation problem between the two flagship astronomical surveys, LSST and Euclid, and propose AS-Bridge to model this stochastic relationship. We demonstrate several potential scientific impacts enabled by different choices of bridge start and end points, including deblending merged sources in crowded fields, synthesizing plausible multi-band color information with uncertainty, and detecting rare astronomical events. We expect additional scientific applications of this inter-survey bridge to emerge as the community explores its capabilities.

We further propose task-specific evaluation metrics and compare AS-Bridge against representative competing baselines. Together with the simulated dataset, this work establishes a benchmark for LSST–Euclid translation that can be used for future methodological development and evaluation.

As future work, we plan to deploy this framework on real observations once the official data releases from both surveys become available. We envision an iterative scientific pipeline built around the two core capabilities of AS-Bridge. A base AS-Bridge model, trained on the regular population, serves as a general-purpose translator and a tool for discovering rare events. These rare events, once inspected and categorized by scientists, can then be used to fine-tune category-specific AS-Bridge models, enabling more accurate reconstruction and analysis of objects belonging to particular rare-event classes.

\bibliographystyle{ACM-Reference-Format}
\bibliography{references}

\appendix
\section{Architecture and Training Details}
\label{app:implementation}
All diffusion baselines in our comparison, including AS-Bridge, Palette, and RePaint, use the same lightweight CNN U-Net backbone to isolate the effect of the bridge formulation and training objective. The backbone starts with a $3\times3$ convolution and uses three resolution stages with channel multipliers $(1,2,4)$ from a base width of 64 and two residual blocks per stage. Residual blocks use GroupNorm, SiLU activations, two $3\times3$ convolutions, and FiLM-style conditioning from a sinusoidal timestep embedding passed through a two-layer MLP with hidden dimension 256. The bottleneck has two additional residual blocks; the decoder uses nearest-neighbor upsampling followed by convolution and standard U-Net skip connections. AS-Bridge additionally uses a learned direction embedding to distinguish the two translation directions.

The diffusion models each use 12.03M parameters for both inference and training. For comparison, pix2pix uses 29.25M at inference and 32.01M during training, OASIS uses 68.45M and 90.70M, and SPADE uses 92.06M and 97.59M. Thus AS-Bridge's gains are not explained by larger model capacity. All methods use the same train/test split. Diffusion models use a linear schedule, AdamW with learning rate $10^{-4}$, batch size 128, and 300 epochs. AS-Bridge training takes 14.4 hours on one NVIDIA H100 GPU.

\begin{table}[H]
    \centering
    \scriptsize
    \caption{Ablation on training-set size.}
    \label{tab:appendix_ablation_train_size}
    \begin{tabular}{rcc}
        \toprule
        \textbf{Train samples} & \textbf{CRPS E$\to$L} & \textbf{CRPS L$\to$E} \\
        \midrule
        $1{,}000$ & 34.10 & 7.79 \\
        $10{,}000$ & 11.60 & 5.31 \\
        $20{,}000$ & 8.65 & 2.95 \\
        $40{,}000$ & 7.96 & 2.70 \\
        $60{,}000$ & 7.93 & 2.50 \\
        $80{,}000$ & 7.92 & 2.43 \\
        $110{,}000$ & 7.90 & 2.38 \\
        \bottomrule
    \end{tabular}
\end{table}

\begin{table}[t]
    \centering
    \scriptsize
    \caption{Ablation on reverse sampling steps. }
    \label{tab:appendix_ablation_sampling_steps}
    \begin{tabular}{rccc}
        \toprule
        \textbf{Steps} & \textbf{CRPS E$\to$L} & \textbf{CRPS L$\to$E} & \textbf{Samples/s} \\
        \midrule
        20 & 9.68 & 2.69 & 40.87 \\
        50 & 7.90 & 2.38 & 16.33 \\
        100 & 7.88 & 2.30 & 8.13 \\
        200 & 7.90 & 2.31 & 4.06 \\
        \bottomrule
    \end{tabular}
\end{table}

\section{Training-Data and Sampling Ablations}
\label{app:ablations}
Tables~\ref{tab:appendix_ablation_train_size} and~\ref{tab:appendix_ablation_sampling_steps} report how paired training-set size and reverse sampling steps affect AS-Bridge. Performance improves rapidly from $1{,}000$ to $40{,}000$ paired regular-galaxy cutouts and then begins to saturate. For sampling, 50 reverse steps provide a strong quality-throughput trade-off.

\end{document}